\newtheoremstyle{plain}
  {\topsep}   
  {\topsep}   
  {\itshape}  
  {0pt}       
  {\bfseries} 
  {.}         
  {5pt plus 1pt minus 1pt} 
  {\thmname{#1}\thmnumber{ #2} \textnormal{(\thmnote{#3})}} 
\newtheorem{theorem}{Theorem}
\newtheorem{corollary}{Corollary}
\renewcommand{\qedsymbol}{$\blacksquare$}
\xpatchcmd{\proof}{\hskip\labelsep}{\hskip5\labelsep}{}{}  
\xpatchcmd{\proof}{\@addpunct{.}}{\@addpunct{:}}{}{}
\newcommand*{\KeepStyleUnderBrace}[1]{%
  \mathop{%
    \mathchoice
    {\underbrace{\displaystyle#1}}%
    {\underbrace{\textstyle#1}}%
    {\underbrace{\scriptstyle#1}}%
    {\underbrace{\scriptscriptstyle#1}}%
  }\limits
}
\newcommand*{\KeepStyleOverBrace}[1]{%
  \mathop{%
    \mathchoice
    {\overbrace{\displaystyle#1}}%
    {\overbrace{\textstyle#1}}%
    {\overbrace{\scriptstyle#1}}%
    {\overbrace{\scriptscriptstyle#1}}%
  }\limits
}
\renewcommand\[{\begin{equation}}
\renewcommand\]{\end{equation}} 
\newcounter{MYtempeqncnt}
\definecolor{dkgreen}{rgb}{0,0.3,0}
\definecolor{gray}{rgb}{0.5,0.5,0.5}
\begin{document}

\title{Rician $K$-Factor-Based Analysis of XLOS Service Probability in 5G Outdoor Ultra-Dense Networks\footnote{This is a companion technical report of \cite{Letter}.}}

\vskip 0.4cm
\author{Hatim Chergui,~\IEEEmembership{Member,~IEEE}, Mustapha Benjillali,~\IEEEmembership{Senior Member,~IEEE},\\ and~Mohamed-Slim Alouini,~\IEEEmembership{Fellow,~IEEE}
\IEEEcompsocitemizethanks{\IEEEcompsocthanksitem H. Chergui and M. Benjillali are with the Communication Systems Department, INPT, Rabat, Morocco. [e-mail: chergui@ieee.org, benjillali@ieee.org].
\IEEEcompsocthanksitem M.-S. Alouini is with King Abdullah University of Science and Technology (KAUST), Computer, Electrical and Mathematical Science and Engineering Division (CEMSE), Thuwal 23955- 6900, Saudi Arabia. [e-mail: slim.alouini@kaust.edu.sa].}%
}

\maketitle
\thispagestyle{empty}
\begin{abstract}
In this report, we introduce the concept of Rician $K$-factor-based radio resource and mobility management for fifth generation (5G) ultra-dense networks (UDN), where the information on the gradual visibility between the new radio node B (gNB) and the user equipment (UE)---dubbed X-line-of-sight (XLOS)---would be required. We therefore start by presenting the XLOS service probability as a new performance indicator; taking into account both the UE serving and neighbor cells. By relying on a lognormal $K$-factor model, a closed-form expression of the XLOS service probability in a 5G outdoor UDN is derived in terms of the multivariate Fox H-function; wherefore we develop a GPU-enabled MATLAB routine and automate the definition of the underlying Mellin-Barnes contour via linear optimization. Residue theory is then applied to infer the relevant asymptotic behavior and show its practical implications. Finally, numerical results are provided for various network configurations, and underpinned by extensive Monte-Carlo simulations.
\end{abstract}

\begin{IEEEkeywords}
5G, GPU, multivariate Fox H-function, Rician $K$-factor, UDN, XLOS service probability.
\end{IEEEkeywords}

\newpage
\section{Introduction}
\IEEEPARstart{T}{he} emergence of massive-MIMO and millimeter-wave (mmWave) as key enablers for 5G ultra-dense networks~\cite{MIMO_mmWave} will certainly prompt the reshaping of radio resource and mobility management algorithms, wherefore a new set of measured quantities might be required as inputs.
In this context, the Rician $K$-factor can serve as an accurate channel metric to measure the gradual visibility condition of a radio link, termed  \textit{X-line-of-sight} (XLOS) here, and encompassing LOS, obstructed-LOS (OLOS) and non-LOS (NLOS) as discrete regimes. In localization services for instance, while the availability of a LOS path is quintessential for the classical triangulation-based schemes such as time-of-arrival (TOA) and direction-of-arrival (DOA), the massive-MIMO-based space-time processing approaches can deliver very concise localization thanks to the high angular resolution of the large scale antennas, and may therefore operate in the worst OLOS/NLOS conditions, yet at the expense of a higher complexity \cite{Localization}. To optimize the computational cost, an operator may adopt a hybrid network configuration where, according to a fine-tuned target $K$-factor threshold, the 5G gNB can switch between the simpler conventional methods and the massive-MIMO ones. On the other hand, in future UDNs with co-located sub-6GHz/mmWave deployment, operators might configure only an anchor internet of things (IoT) carrier on the sub-6GHz---given the scarcity of spectrum---and offload a great part of the traffic to mmWave stand-alone IoT carriers. For energy savings considerations, the 5G network might prevent IoT devices from performing the periodic inter-frequency measurements and reporting by triggering blind inter-frequency re-selections \cite{NB-IoT} to a mmWave IoT carrier once the sub-6GHz $K$-factor exceeds a preset threshold\footnote{This threshold can be determined using, e.g., machine learning approaches such as \cite{Learning}, where mmWave measurements are inferred from sub-6GHz ones.}. Zooming out from the applications, a mathematical characterization of XLOS is yet to be established.

In this letter, we propose the XLOS service probability as a performance indicator, and start by introducing a broader definition of the concept thereof; accommodating the monitoring of both the UE serving and neighbor cells. By invoking a distance-based lognormal $K$-factor model for outdoor fixed and/or sporadically moving UEs \cite{Greenstein}\footnote{The extension of the model to the vehicular case is left for future works.}, we then derive---in terms of the multivariate Fox H-function \cite[A.1]{Mathai2}---a closed-form expression for the XLOS service probability in a 5G OFDMA-based multi-tier heterogeneous network (HetNet), where a majority of cellular IoT devices and outdoor customer premise equipments (CPEs) are either fixed or changing their locations in an intermittent manner (e.g., position sensors with event-driven reporting) \cite{NB-IoT}, and where the $K$-factor variations stem also from the movement of the scattering objects (e.g., vehicles, containers, windblown leaves). Finally, the asymptotic behavior highlighting the effect of different network and channel parameters is studied using the residue theory.
\vspace{-2mm}
\section{System Model}
Consider an outdoor 2 GHz orthogonal frequency division multiple access (OFDMA)-based 5G \cite{NR_OFDM} $N$-tiers UDN, where each cell class $n$ $(n = 1,\ldots,N)$ is modeled as a homogeneous Poisson point process (PPP) $\Phi_{n}$, and distinguished by its deployment density $\lambda_{n}$, maximum transmit power per resource element (RE) $P_{n}$, antennas height $h_{n}$ and beamwidth $\theta_{n}$. The corresponding channel is presenting a large scale fading, with constant path-loss exponent $\nu$ and lognormal shadowing $\mathcal{X}_{n}$ of mean $\mu_{n}$ and standard deviation $\sigma_{n}$. Assuming that UE locations follow an independent PPP $\Phi_{u}$ of density $\lambda_{u}$, the downlink analysis is performed at a typical UE located at the origin \cite{Stochastic_Geometry_book}. To model a massive IoT and CPE device ecosystem, we further suppose that UE locations are initially fixed, but may change every now and then.
\vspace{-3mm}
\subsection{Cell Monitoring Criteria}
As we are dealing with an outdoor context, we suppose that all tier's cells are open access (including femtocells). We also adopt a reference signal receive power (RSRP)-based cell selection, wherein each UE periodically monitors the collection of the $M$ strongest cells, dubbed here \textit{monitoring set}, and ends up connecting to the best server. Since UE measurements rely on the long-term frequency-domain post-equalization receive power, small-scale fading variations do not impact cell selection/reselection and are not, therefore, reflected in the actual RSRP that reads
\begin{equation}
P_{x_{n}}=P_{n}\mathcal{X}_{n} {\|x_{n}\|}^{-\nu},\label{eq:Rx_power}
\end{equation}
where ${\|x_{n}\|}^{-\nu}$ stands for the standard path-loss between a typical UE and an $n^{\text{th}}$-tier BS located at ${x_{n}} \in \Phi_n$.
\vspace{-1.5mm}
\vspace{-3mm}
\subsection{$K$-Factor Model}
The $K$-factor--like all large scale parameters (LSPs)--follows a lognormal distribution with mean and variance depending on the frequency band, environment and  transmission/reception schemes (cf. \cite{TR38901}, \cite{MeanVar} and references therein). Without loss of generality, let us adopt the findings of \cite{Greenstein} for instance, where we assume that the narrowband $K$-factor periodically measured by a UE at independent positions can be empirically modeled for the $n\textsuperscript{th}$-tier as,
\begin{equation}
K_{x_{n}} = K_{n}\gamma_{n}{\|x_{n}\|}^{-\alpha},
\end{equation}
where $\alpha>0$, $K_{n}$ is the $K$-factor intercept defined as\footnote{According to \cite{Greenstein}, this model involves also a seasonal factor $F_s$ that reflects the vegetation. For the sake of simplicity and without loss of generality, we consider the Summer's dense vegetation case $F_s=1$.}
\begin{equation}
K_{n}=\left(h_n/h_0\right)^{\kappa_1} \left(\theta_n/\theta_0\right)^{\kappa_2} K_0,
\end{equation}
with $\kappa_1>0$, $\kappa_2<0$, $K_0>0$, and $\gamma_{n}$ is an independent lognormal variable, whose decibel value is zero mean with a standard deviation $\sigma_K$. Accurate values of these model parameters can be obtained through a calibration process according to the target environment. New Jersey's measurement campaign in \cite{Greenstein}, for instance, yields $h_0=3\, \mathrm{m}$, $\theta_0=17\degree$, $\alpha=0.5$, $\kappa_1=0.46$, $\kappa_2=-0.62$, $K_0 = 10$, and $\sigma_{K}=8 \, \mathrm{dB}$.
\subsection{Equivalent Formulation}
Since manipulating distances in PPPs is easier, let us transform the RSRP process (\ref{eq:Rx_power}) into a simple unit-power PPP $\widetilde{\Phi}_n$, where the strongest power would correspond to the nearest neighbor cell to the typical UE. By invoking the random displacement theorem \cite[1.3.9]{Stochastic_Geometry_book}, \cite[Corollary 3]{Shotgun} shows that the two-dimensional (2D) process (\ref{eq:Rx_power}) is equivalent to another 2D process
$P_{y_{n}}={\|y_{n}\|}^{-\nu}$, such that $y_{n}\in \widetilde{\Phi}_n$ with density $\widetilde{\lambda}_n=\lambda_n \Omega_n$, where $\Omega_n=P_n^{2/\nu}\mathbf{E}\left[\mathcal{X}_{n}^{2/\nu}\right]$ and the finite lognormal fractional moment $\mathbf{E}\left[\mathcal{X}_{n}^{2/\nu}\right] = \exp\left[\frac{\mathrm{ln}10}{5}\frac{\mu_n}{\nu}+\frac{1}{2}\left(\frac{\mathrm{ln}10}{5}\frac{\sigma_n}{\nu}\right)^2\right]$. By means of the mapping theorem \cite[1.3.11]{Stochastic_Geometry_book}, the $K$-factor can also be re-expressed as
\begin{equation}
K_{y_{n}} = K_{n}\gamma_{n}\Omega_{n}^{-\alpha/2} {\|y_{n}\|}^{-\alpha}, \, y_{n}\in \widetilde{\Phi}_n.
\end{equation}
\section{XLOS Service Probability}
XLOS service probability in the vicinity of a UE, $P_{\mathrm{XLOS}}$, is defined as the probability that at least one cell in the monitoring set presents a $K$-factor higher than a threshold, say $K_{\mathrm{th}}$, that can be fine-tuned depending on the target service, i.e.,
\begin{equation}
P_{\mathrm{XLOS}}\left(K_{\mathrm{th}}\right)\triangleq \mathbf{Pr}\left[\bigcup_{m=1}^{M} K_{y_{n_m}}>K_{\mathrm{th}}, \mathbf{n} \in \mathcal{M}\right],\label{eq:PLOS}
\end{equation}
where $\mathbf{n}=\left(n_1,\ldots,n_M\right)$ and $\mathcal{M}=\{1,\ldots,N \}^M $. In the sequel, we derive a closed-form expression for the XLOS service probability and study its asymptotic behavior.
\vspace{-1.5mm}
\subsection{Closed-Form Analysis}
Using the total probability theorem as well as the independence between $\gamma_{n_m},\, m=1,\ldots,M$, the definition (\ref{eq:PLOS}) can be rewritten as \vspace{-2mm}
\begin{equation}
\begin{split}
P_{\mathrm{XLOS}}\left(K_{\mathrm{th}}\right) &= 1-\mathbf{Pr}\left[\bigcap_{m=1}^{M} K_{y_{n_m}}\leq K_{\mathrm{th}}, \mathbf{n} \in \mathcal{M}\right]\\
&=1-\sum_{\mathbf{n}\in\mathcal{M}}\mathbf{Pr}\left[y_{n_m}\in \widetilde{\Phi}_{n_m},m=1,\ldots,M\right]\\
&\hspace{16mm}\times\bigintsss_{0}^{z_{n_2}}\hspace{-6mm}\ldots\hspace{-1mm}\bigintsss_{0}^{z_{n_M}}\hspace{-3mm}\bigintsss_{0}^{+\infty}\hspace{-2mm}\prod_{m=1}^{M}\hspace{-0.5mm}\mathrm{CDF}_{\gamma_{n_m}}\hspace{-1mm}\left(\hspace{-0.5mm}\frac{K_{\mathrm{th}}\Omega_{n_m}^{\alpha/2}z_{n_m}}{K_{n_m}}\Bigg\vert z_{n_m}\hspace{-1mm} \right) f\left(z_{n_1},\ldots,z_{n_M}\right)\mathrm{d}z_{n_1}\ldots\mathrm{d}z_{n_M},
\label{eq:PLOS2}
\end{split}
\end{equation}
where $z_{n_m}={\|y_{n_m}\|}^{\alpha}$ and $f\left(\cdot\right)$ is the joint probability density function (PDF) whose variables verify $0\leq z_{n_1}\leq z_{n_2}\leq\ldots\leq z_{n_M}$.
Moreover, the independence between the homogeneous PPPs $\widetilde{\Phi}_{n_m}$ as well as the superposition theorem \cite[1.3.3]{Stochastic_Geometry_book} imply that the sampling probability $\mathbf{Pr}\left[y_{n_m}\in \widetilde{\Phi}_{n_m},m=1,\ldots,M\right]=\prod_{m=1}^{M}\rho_{n_m}$, where $\rho_{n_m}=\widetilde{\lambda}_{n_m}/\lambda_T$ and $\lambda_{T}=\sum_{n=1}^{N}\widetilde{\lambda}_{n}$.
To further develop (\ref{eq:PLOS2}), let us introduce the following new theorem.

\begin{theorem}[Unified Expression for the Product of Lognormal CDFs\footnote{This theorem can be viewed as a generalization of the well-established Gauss-Hermite representations of the lognormal PDF and CDF (see e.g., \cite{Unified}).}]\label{Thm1}
Consider $M$ independent lognormal random variables $\gamma_m \, \left(m=1,\ldots,M\right)$, with mean $\mu_m(\mathrm{dB})$ and standard deviation $\sigma_m(\mathrm{dB})$. A unified expression for the product of their individual CDFs---that is equal to their joint CDF $\mathrm{CDF}_{\gamma_1,\ldots,\gamma_M}\left(\gamma_{\mathrm{th},1},\ldots,\gamma_{\mathrm{th},M}\right)$---is given by
\begin{equation}
\prod_{m=1}^{M}\hspace{-1mm}\mathrm{CDF}_{\gamma_{m}}\left(\gamma_{\mathrm{th},m}\right)=\frac{1}{\pi^{M/2}}\sum_{l=1}^{L} w_l \prod_{m=1}^{M} \mathrm{H}_{1,1}^{0,1}\left[\frac{\gamma_{\mathrm{th},m}}{\omega_{l,m}}\begin{array}{|c} (1,1)\\ (0,1) \end{array}\hspace{-1mm}\right],\label{Thm1_CDF}
\end{equation}
where $\omega_{l,m}=10^{(\sqrt{2}\sigma_m u_{l,m}+\mu_m)/ 10}$ for $l \in \{1,\ldots,L \}$, $w_l$ and $\left(u_{l,1},\ldots,u_{l,M}\right)$ are respectively the weight and the $M$ abscissas of the $L^{\mathit{th}}$-order $M$-dimensional Gaussian weight Stroud monomial cubature \cite{Stroud,Cools}, with $\sum_{l=1}^{L}w_l=\pi^{M/2}$.
\end{theorem}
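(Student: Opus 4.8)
The plan is to route the product of lognormal CDFs through a Gaussian‑weighted integral over $\mathbb{R}^{M}$, discretise that integral with the Stroud monomial rule, and then identify each resulting summand as a Fox H‑function. First I would record the exact integral form of a single lognormal CDF: since $10\log_{10}\gamma_m\sim\mathcal{N}(\mu_m,\sigma_m^{2})$, one has $\mathrm{CDF}_{\gamma_m}(\gamma_{\mathrm{th},m})=\tfrac12\,\mathrm{erfc}\!\left(\frac{\mu_m-10\log_{10}\gamma_{\mathrm{th},m}}{\sqrt{2}\,\sigma_m}\right)$, which I rewrite as $\pi^{-1/2}\int_{\mathbb{R}}\mathbf{1}\{t\le a_m\}\,e^{-t^{2}}\mathrm{d}t$ with $a_m\triangleq\frac{10\log_{10}\gamma_{\mathrm{th},m}-\mu_m}{\sqrt{2}\,\sigma_m}$. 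Because the $\gamma_m$ are independent, the joint CDF factorises, and Fubini turns the left‑hand side of \eqref{Thm1_CDF} into $\pi^{-M/2}\int_{\mathbb{R}^{M}}\prod_{m=1}^{M}\mathbf{1}\{t_m\le a_m\}\,e^{-\|\mathbf{t}\|^{2}}\mathrm{d}\mathbf{t}$.

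Second, I would apply the $L$‑point $M$‑dimensional Stroud Gaussian‑weight cubature, that is, replace $\int_{\mathbb{R}^{M}}g(\mathbf{t})\,e^{-\|\mathbf{t}\|^{2}}\mathrm{d}\mathbf{t}$ by $\sum_{l=1}^{L}w_l\,g(\mathbf{u}_l)$ for the indicator integrand $g(\mathbf{t})=\prod_{m}\mathbf{1}\{t_m\le a_m\}$, which produces $\pi^{-M/2}\sum_{l=1}^{L}w_l\prod_{m=1}^{M}\mathbf{1}\{u_{l,m}\le a_m\}$. Here the normalisation $\sum_{l}w_l=\pi^{M/2}$ needs no separate argument: it is exactly the rule applied to $g\equiv 1$ together with $\int_{\mathbb{R}^{M}}e^{-\|\mathbf{t}\|^{2}}\mathrm{d}\mathbf{t}=\pi^{M/2}$.

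Third, I would undo the substitution of the first step, noting $\mathbf{1}\{u_{l,m}\le a_m\}=\mathbf{1}\{\gamma_{\mathrm{th},m}\ge\omega_{l,m}\}$ with $\omega_{l,m}=10^{(\sqrt{2}\sigma_m u_{l,m}+\mu_m)/10}$ precisely as in the statement, and then verify the pointwise identity $\mathrm{H}_{1,1}^{0,1}[z\,|\,(1,1);(0,1)]=\mathbf{1}\{z>1\}$ for $z>0$ directly from its Mellin--Barnes integral: the integrand collapses to $\Gamma(-s)/\Gamma(1-s)=-1/s$, whose only (simple) pole $s=0$ lies to the right of the defining contour, so closing to the right when $z>1$ captures the residue contribution $1$, while closing to the left when $z<1$ encircles nothing and gives $0$. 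Substituting $z=\gamma_{\mathrm{th},m}/\omega_{l,m}$ into every factor then reconstitutes \eqref{Thm1_CDF}.

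I expect the third step to be the main obstacle — not in difficulty but in care: one has to chase the dB$\leftrightarrow$linear$\leftrightarrow$reciprocal conversions so that the event $\{t_m\le a_m\}$ lines up with the side of unity selected by the $\mathrm{H}_{1,1}^{0,1}$ contour, and one has to pin that Mellin--Barnes contour on the correct side of $s=0$ for the bracket parameters $(1,1)$ over $(0,1)$ to yield the desired indicator rather than its complement. It is also worth stating explicitly that the second step is an equality in the same ``Gauss--Hermite representation'' sense in which the theorem generalises the classical lognormal PDF/CDF expansions — that is, exact on the polynomial class integrated by the Stroud rule — so that \eqref{Thm1_CDF} is to be read in that established sense.
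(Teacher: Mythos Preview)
Your argument is correct and lands on the same formula, but by a different and somewhat more elementary path than the paper's. The paper works at the PDF level: it rewrites the joint lognormal density as a Gaussian-weighted integral over $\mathbb{R}^{M}$ of a product of Dirac masses, identifies each $\delta(\gamma_m-a)$ with the degenerate Fox function $\mathrm{H}_{0,0}^{0,0}[\gamma_m/a]$, applies the Stroud cubature to that PDF representation, and only then integrates each factor from $0$ to $\gamma_{\mathrm{th},m}$ via the H-function primitive identity \cite[Eq.~(2.53)]{Mathai2} to promote every $\mathrm{H}_{0,0}^{0,0}$ to the $\mathrm{H}_{1,1}^{0,1}$ appearing in the statement. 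You instead stay at the CDF level throughout, apply the cubature directly to the indicator integrand, and recognise the Heaviside step $\mathbf{1}\{z>1\}$ as $\mathrm{H}_{1,1}^{0,1}[z\,\vert\,(1,1);(0,1)]$ straight from its Mellin--Barnes kernel $1/\zeta$. Your route saves one H-function identity and avoids the delta-as-$\mathrm{H}_{0,0}^{0,0}$ device; the paper's route, in exchange, makes the link to the Gauss--Hermite PDF expansion of \cite{Unified} more visible and yields the joint-PDF formula~(\ref{Proof_PDF2}) as a by-product. Your closing caveat---that the cubature step is an equality only in the ``Gauss--Hermite representation'' sense, i.e., exact on the polynomial class the Stroud rule integrates---is well placed and applies with equal force to the paper's proof, which discretises the even less regular Dirac integrand.
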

\begin{proof}
cf. Appendix A.
\end{proof}

On the other hand, an explicit expression of the joint PDF $f(\cdot)$ can be obtained via the following corollary.
\begin{corollary}[of Theorem {\cite[Appendix]{Distance}}]\label{Cor1}In a multi-tier random network modeled in terms of $N$ independent PPPs $\widetilde{\Phi}_n\left(n=1,\ldots,N\right)$ with densities $\widetilde{\lambda}_n$, let $z_{m}=r_{m}^{\alpha}\left(m=1,\ldots,M\right)$, such that $r_{m}$ is the distance of the $m^{th}$ neighbor with respect to a certain origin. The joint PDF of $z_{1},\ldots,z_{M}$ unconditionally to $\{\widetilde{\Phi}_n\}$ reads
\vspace{-1.5mm}
\begin{equation}
f\left(z_{1},\ldots,z_{M}\right)=\left(\frac{2\pi\lambda_{T}}{\alpha}\right)^{\hspace{-1mm}M}\hspace{-1mm}e^{-\pi\lambda_{T}z_{M}^{2/\alpha}}\prod_{m=1}^{M}z_{m}^{2/\alpha-1},\label{eq:fz}
\vspace{-3mm}
\end{equation}
where $\lambda_{T}=\sum_{n=1}^{N}\widetilde{\lambda}_{n}$.
\end{corollary}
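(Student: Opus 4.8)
The plan is to collapse the $N$-tier model into a single equivalent homogeneous PPP and then transport the classical nearest-neighbor distance law through the deterministic substitution $z_m = r_m^{\alpha}$. First I would invoke the superposition theorem \cite[1.3.3]{Stochastic_Geometry_book}: since the $\widetilde{\Phi}_n$ are mutually independent homogeneous PPPs, their superposition $\widetilde{\Phi}=\bigcup_{n=1}^{N}\widetilde{\Phi}_n$ is again a homogeneous PPP, of intensity $\lambda_T=\sum_{n=1}^{N}\widetilde{\lambda}_n$. The ordered distances $r_1\le r_2\le\ldots\le r_M$ of the $M$ nearest points to the origin depend only on the superposed point locations and not on their tier labels, so the joint law of $(r_1,\ldots,r_M)$ unconditionally on $\{\widetilde{\Phi}_n\}$ coincides with the $M$-nearest-neighbor distance law of $\widetilde{\Phi}$.

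Next I would specialize the result of \cite[Appendix]{Distance} to $\widetilde{\Phi}$. Equivalently, one may re-derive it directly by noting, via the mapping theorem \cite[1.3.11]{Stochastic_Geometry_book}, that $\{\pi\|y\|^{2}:y\in\widetilde{\Phi}\}$ is a rate-$\lambda_T$ Poisson process on $[0,\infty)$ (the push-forward of planar Lebesgue measure under $y\mapsto\pi\|y\|^2$ is again Lebesgue measure), whose first $M$ ordered points have joint density $\lambda_T^{M}e^{-\lambda_T t_M}$ on $0\le t_1\le\ldots\le t_M$; substituting $t_m=\pi r_m^{2}$ with $\mathrm{d}t_m/\mathrm{d}r_m = 2\pi r_m$ then gives
\begin{equation*}
f_{r_1,\ldots,r_M}(r_1,\ldots,r_M)=\left(2\pi\lambda_T\right)^{M}e^{-\pi\lambda_T r_M^{2}}\prod_{m=1}^{M}r_m,\qquad 0\le r_1\le\ldots\le r_M.
\end{equation*}

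Finally I would perform the change of variables $r_m=z_m^{1/\alpha}$. Since $\alpha>0$, the map $(r_1,\ldots,r_M)\mapsto(z_1,\ldots,z_M)$ is a monotone bijection of the ordered cone onto itself, with diagonal Jacobian $\prod_{m=1}^{M}\mathrm{d}r_m/\mathrm{d}z_m=\alpha^{-M}\prod_{m=1}^{M}z_m^{1/\alpha-1}$. Substituting $r_M^{2}=z_M^{2/\alpha}$ and merging $\prod_m r_m=\prod_m z_m^{1/\alpha}$ with the Jacobian factor into $\prod_m z_m^{2/\alpha-1}$ reproduces exactly~(\ref{eq:fz}). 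I expect no genuine obstacle here: the substantive content is entirely carried by \cite[Appendix]{Distance}, and the corollary is merely its specialization to the superposed intensity $\lambda_T$ together with the power-law reparametrization. The only points deserving a moment's care are that the substitution is triangular --- indeed diagonal --- so that the Jacobian factorizes over $m$, and that the ordering constraint $r_1\le\ldots\le r_M$ is preserved under $z\mapsto z^{1/\alpha}$; both are immediate for $\alpha>0$.
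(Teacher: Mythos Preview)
Your proposal is correct and follows essentially the same approach as the paper: apply the superposition theorem to collapse the $N$ independent PPPs into a single homogeneous PPP $\Phi_T$ of intensity $\lambda_T$, invoke the joint distance distribution of the first $M$ neighbors from \cite[Appendix]{Distance}, and then perform the change of variables $z_m=r_m^{\alpha}$. The paper's proof is a two-line sketch of exactly these steps, whereas you have additionally spelled out the Jacobian computation and (optionally) re-derived the distance law via the mapping theorem, but the underlying argument is the same.
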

\begin{proof}
cf. Appendix B.
\end{proof}

By making use of the aforementioned sampling probability as well as Theorem \ref{Thm1} and Corollary \ref{Cor1}, the XLOS service probability (\ref{eq:PLOS2}) can be rewritten after some algebraic manipulations as,
\begin{equation}
\begin{split}
P_{\mathrm{XLOS}}\left(K_{\mathrm{th}}\right) &= 1-\left(\frac{2\sqrt{\pi}}{\alpha}\right)^{\hspace{-1mm}M}\hspace{-2mm}\sum_{\mathbf{n}\in\mathcal{M}}\prod_{m=1}^{M}\widetilde{\lambda}_{n_m}\sum_{l=1}^{L} w_l \times I_1, \label{eq:PLOS3}
\end{split}
\end{equation}
where the multidimensional integral $I_1$ is expressed as 
\begin{equation}
\begin{split}
I_1 &=\hspace{-1mm}\bigintsss_{0}^{z_{n_2}}\hspace{-6mm}\ldots\hspace{-1mm}\bigintsss_{0}^{z_{n_M}}\hspace{-3mm}\bigintsss_{0}^{+\infty}\hspace{-2mm}\prod_{m=1}^{M} z_{n_m}^{2/\alpha-1} \mathrm{H}_{1,1}^{0,1}\left[\frac{K_{\mathrm{th}}\Omega_{n_m}^{\alpha/2}z_{n_m}}{\omega_{l,m} K_{n_m}}\begin{array}{|c} (1,1)\\ (0,1) \end{array}\hspace{-1mm}\right]\times e^{-\pi\lambda_{T}z_{n_M}^{2/\alpha}}\,\mathrm{d}z_{n_1}\ldots\mathrm{d}z_{n_M}.\label{eq:I1}
\end{split}
\vspace{-1.5mm}
\end{equation}
\begin{figure*}[!t] 
\normalsize 
\setcounter{MYtempeqncnt}{\value{equation}} 
\setcounter{equation}{14} 
\vskip -0.4cm

\begin{equation}
\footnotesize
I_1 = \frac{\alpha}{2}\left(\pi\lambda_T\right)^{\hspace{-0.5mm}-M}\mathrm{H}_{M,M-1\colon\KeepStyleUnderBrace{ 1,1\colon\ldots\colon 1,1}_{M\mathrm{-times}}}^{\hspace{1.6mm}0,M\hspace{3.4mm}\colon\hspace{0.5mm} 0,1\colon\ldots\colon 0,1}\left[\begin{matrix}\frac{K_{\mathrm{th}}\Lambda_{n_1}^{\alpha/2}}{\omega_{l,1} K_{n_1}}\\ \vdots \\ \frac{K_{\mathrm{th}}\Lambda_{n_M}^{\alpha/2}}{\omega_{l,M} K_{n_M}}\end{matrix}\hspace{-1mm}\,\, \middle\vert \,\begin{matrix}\left(1-\frac{2i}{\alpha};\mathds{1}_{1 \leq i},\ldots,\mathds{1}_{M \leq i}\right)_{1\leq i \leq M-1},\left(1-M;\KeepStyleOverBrace{\frac{\alpha}{2},\ldots,\frac{\alpha}{2}}^{M-\mathrm{times}}\right)\\ \left(-\frac{2i}{\alpha};\mathds{1}_{1 \leq i},\ldots,\mathds{1}_{M \leq i}\right)_{1\leq i \leq M-1} \end{matrix}\hspace{-1mm}\, \middle\vert \,\begin{matrix}\left(1,1\right)\\ \left(0,1\right)\end{matrix}\, \middle\vert \,\begin{matrix}\ldots\end{matrix}\, \middle\vert \,\begin{matrix}\left(1,1\right)\\ \left(0,1\right)\end{matrix}\right]\label{eq:NFOxH}
\vspace{-3.25mm}
\normalsize
\end{equation} 

\setcounter{equation}{\value{MYtempeqncnt}} 
\hrulefill 
\vspace{-5mm}
\end{figure*}
To derive a closed-form solution for (\ref{eq:I1}), let us recall the representation of the involved Fox H-functions in terms of Mellin-Barnes integrals \cite[Eq. (1.1.1)]{Mathai2}, i.e,
\begin{equation}
\mathrm{H}_{1,1}^{0,1}\left[z\,\begin{array}{|c} (1,1)\\ (0,1) \end{array}\hspace{-1mm}\right]=\frac{1}{2\pi j}\bigintsss_{\mathcal{C}_m}\phi\left(\zeta_m\right)z^{\zeta_m} \, \mathrm{d}\zeta_m,\label{eq:Mellin}
\end{equation}
where $\phi\left(\zeta_m\right)=\Gamma\left(\zeta_m\right)/\Gamma\left(1+\zeta_m\right)$, and contours $\mathcal{C}_m \,\left(m=1,\ldots,M\right)$ are defined such that $\Re\left(\zeta_m\right)>0$; the highest pole on the left.
Combining (\ref{eq:Mellin}) with (\ref{eq:I1}) and interchanging the order of the real and contour integrals\footnote{Which is permissible given the absolute convergence of the
involved integrals.} yields
\vspace{-2mm}
\begin{equation}
\begin{split}
I_1 &=\hspace{-1mm}\left(\frac{1}{2\pi j}\right)^{\hspace{-1mm}M}\hspace{-1mm}\bigintsss_{\mathcal{C}_1}\hspace{-2.5mm}\ldots\hspace{-1.5mm}\bigintsss_{\mathcal{C}_M}\Psi\left(\zeta_1,\ldots,\zeta_M\right)\times \prod_{m=1}^{M} \phi\left(\zeta_m\right)\left(\frac{K_{\mathrm{th}}\Omega_{n_m}^{\alpha/2}}{\omega_{l,m} K_{n_m}}\right)^{\hspace{-1mm}\zeta_m} \, \mathrm{d}\zeta_1\ldots\mathrm{d}\zeta_M,
\end{split}\label{eq:I1_Mellin}
\end{equation}
with the multivariate term $\Psi$ given by
\begin{equation}
\begin{split}
\Psi\left(\zeta_1,\ldots,\zeta_M\right)&=\bigintsss_{0}^{z_{n_2}}\hspace{-6mm}\ldots\hspace{-1mm}\bigintsss_{0}^{z_{n_M}}\hspace{-3mm}\bigintsss_{0}^{+\infty}\hspace{-2mm}e^{-\pi\lambda_{T}z_{n_M}^{2/\alpha}}\times\prod_{m=1}^{M} z_{n_m}^{2/\alpha+\zeta_{m}-1}\,\mathrm{d}z_{n_1}\ldots\mathrm{d}z_{n_M}.\label{eq:Psi}
\end{split}
\end{equation}
Given that $\alpha \in \mathbb{R}^{+*}$ and $\Re\left(\zeta_m\right)>0$, and using the identity $1/a=\Gamma\left(a\right)/\Gamma\left(1+a\right)$, the iterated integrals with respect to $z_{n_1},\ldots,z_{n_{M-1}}$ in (\ref{eq:Psi}) can be successively resolved by induction. The resulting integral relating to $z_{n_M}$ is then obtained using \cite[Eq. (3.478.1)]{Table_of_Integrals}, which leads to
\hspace{-1.5mm}
\begin{equation}
\begin{split}
\Psi\left(\zeta_1,\ldots,\zeta_M\right)&=\frac{\alpha}{2}\left(\pi\lambda_T\right)^{-\left(M+\frac{\alpha}{2}\sum_{m=1}^{M}\zeta_m\right)}\times \Gamma\left(\hspace{-0.5mm} M+\frac{\alpha}{2}\sum_{m=1}^{M}\zeta_m\hspace{-1mm}\right)\hspace{-1mm}\prod_{i=1}^{M-1}\hspace{-1mm}\frac{\Gamma\left(\frac{2i}{\alpha}+\sum_{m=1}^{M}\mathds{1}_{m \leq i}\zeta_m\right)}{\Gamma\left(1+\frac{2i}{\alpha}+\sum_{m=1}^{M}\mathds{1}_{m \leq i}\zeta_m\right)}\label{eq:Psi2}.
\end{split}
\hspace{-1.5mm}
\end{equation}
By plugging (\ref{eq:Psi2}) into (\ref{eq:I1_Mellin}), we recognize that integral $I_1$ can be re-expressed in terms of the multivariate Fox H-function \cite[A.1]{Mathai2} as given by~\eqref{eq:NFOxH} on top of this page, where parameter $\Lambda_{n_m}\triangleq\Omega_{n_m}/\pi\lambda_T$ is encompassing network density, power and shadowing effects. Finally, a closed-form expression for $P_{\mathrm{XLOS}}$ is deduced by substituting (\ref{eq:NFOxH}) in (\ref{eq:PLOS3}).
\setcounter{equation}{15}
\vspace{-3.5mm}
\subsection{Asymptotic Behavior}
As depicted in Table I, the two asymptotic regimes of the ratio $K_{\mathrm{th}}\Lambda_{n_m}^{\alpha/2}/\omega_{l,m} K_{n_m}$ reflect many practical scenarios, wherefore it is interesting to establish the corresponding XLOS service probability expressions; denoted $\overline{P}_{\mathrm{XLOS}}$ in the sequel. 
Let $\mathcal{H}$ stand for the multivariate Fox H-function in (\ref{eq:NFOxH}) where
\begin{equation}
\mathcal{H}=\left(\frac{1}{2\pi j}\right)^{\hspace{-1mm}M}\hspace{-1mm}\bigintsss_{\mathcal{C}_1}\hspace{-2.5mm}\ldots\hspace{-1.5mm}\bigintsss_{\mathcal{C}_M}F\left(\zeta_1,\ldots,\zeta_M\right)\, \mathrm{d}\zeta_1\ldots\mathrm{d}\zeta_M.\label{eq:H_Mellin}
\end{equation}
In view of the series representations of the monovariate Fox H-function \cite[Theorem 1.2]{Kilbas} (while noticing the inverted definition of the H-function therein), an asymptotic expression of (\ref{eq:NFOxH}) is obtained as follows.

\noindent
\textbf{Low ratio regime}: Since the integrand $F$ has no poles on the right of the $M$ individual contours in (\ref{eq:H_Mellin}), \cite[Eq. (1.2.23)]{Kilbas} implies that $\mathcal{H}\simeq 0$, and thereby $\overline{P}_{\mathrm{XLOS}}=1$.

\noindent
\textbf{High ratio regime}: By applying \cite[Eq. (1.2.22)]{Kilbas} to the $M$ individual contour integrals, an approximation of $\mathcal{H}$ is given in terms of the residues of $F$ as
\begin{equation}
\begin{split}
\mathcal{H}&\simeq \mathbf{Res}\left[F,\left(0,\ldots,0\right)\right]+\mathbf{Res}\left[F,\left(-\frac{2}{\alpha},0,\ldots,0\right)\right]\\[-2mm]
&\simeq\lim_{\zeta_M\rightarrow0}\ldots\lim_{\zeta_1\rightarrow0} \prod_{m=1}^{M}\zeta_m F\left(\zeta_1,\ldots,\zeta_M\right)+\lim_{\zeta_M\rightarrow0}\ldots\lim_{\zeta_2\rightarrow0}\lim_{\zeta_1\rightarrow -\frac{2}{\alpha}} \left(\zeta_1+\frac{2}{\alpha}\right)\hspace{-1mm}\prod_{m=2}^{M}\zeta_m F\left(\zeta_1,\ldots,\zeta_M\right),
\end{split}
\vspace{-1.5mm}
\end{equation}
which evaluates to
\vspace{-2mm}
\begin{equation}
\mathcal{H}\simeq \left(\frac{\alpha}{2}\right)^{M-1}\left[1-\left(\frac{K_{\mathrm{th}}\Lambda_{n_1}^{\alpha/2}}{\omega_{l,1} K_{n_1}}\right)^{\hspace{-1mm}2/\alpha}\right].\label{eq:H_asym}
\vspace{-1.5mm}
\end{equation}
Finally, combining (\ref{eq:PLOS3}), (\ref{eq:NFOxH}) and (\ref{eq:H_asym}), as well as recalling that $\sum_{l=1}^{L}w_l=\pi^{M/2}$, we obtain after some algebraic manipulations
\vspace{-2mm}
\begin{equation} 
\overline{P}_{\mathrm{XLOS}}=\frac{1}{\pi^{M/2-1}}\sum_{\mathbf{n}\in \mathcal{M}}\frac{\lambda_T}{\Omega_{n_1}}\prod_{m=1}^{M}\rho_{n_m}\sum_{l=1}^{L}w_l \left(\frac{\omega_{l,1} K_{n_1}}{K_{\mathrm{th}}}\right)^{\hspace{-1mm}2/\alpha}.
\label{eq:PLOS_asym}
\end{equation} 

\begin{table}[!htb]
\vspace{-4mm}
\label{Table1}
\centering	
\newcolumntype{M}[1]{>{\centering\arraybackslash}m{#1}}

\caption{XLOS service probability asymptotic expressions}\vspace{-2mm}
\begin{tabular}{|m{4cm}|M{8cm}|M{2cm}|}
\hline 
\multicolumn{1}{|>{\centering\arraybackslash}M{4cm}|}{\cellcolor{black!20} Case} & \multicolumn{1}{>{\centering\arraybackslash}M{8cm}|}{\cellcolor{black!20} Practical Scenarios} & \multicolumn{1}{>{\centering\arraybackslash}M{2cm}|}{\cellcolor{black!20}$\overline{P}_{\mathrm{XLOS}}$}\\
\hline 
\hline
$\frac{K_{\mathrm{th}}\Lambda_{n_m}^{\alpha/2}}{\omega_{l,m} K_{n_m}}\rightarrow 0$ & 
\begin{tabular}{@{\textbullet~}p{7cm}@{}}
Poor LOS	 (i.e., low $K_{\mathrm{th}}$),\\
Fair LOS quality in a UDN with high power and narrow-beam antennas (i.e., high $\lambda_{T}$ and $K_{n_m}$).
\end{tabular} & $1$\\
\hline 
$\frac{K_{\mathrm{th}}\Lambda_{n_m}^{\alpha/2}}{\omega_{l,m} K_{n_m}}\rightarrow +\infty$ &
\begin{tabular}{@{\textbullet~}p{7cm}@{}}
High quality LOS (i.e., high $K_{\mathrm{th}}$),\\
Fair LOS quality in a low density HetNet with low power and large beamwidth antennas (i.e., low $\lambda_{T}$ and $K_{n_m}$).
\end{tabular} & Equation (\ref{eq:PLOS_asym})\\
\hline
\end{tabular}
\end{table}
\section{Numerical Results and Mathematical Software}
To validate our theoretical findings, we conduct Monte-Carlo simulations for three practical scenarios as depicted in Table~II, and we adopt New Jersey's calibration presented in II-B with $\sigma_K=3\, \mathrm{dB}$. The analytical expressions are evaluated via a degree-$11$ Stroud cubature for which $L=(4M^5-20M^4+140M^3-130M^2+96M+15)/15$. To that end, we make use of Stenger's tabulations \cite{Stenger} to update the Matlab code in \cite{StroudMatlab}. Moreover, using the quasi Monte-Carlo framework, we introduce in Appendix C an efficient GPU-oriented MATLAB routine to calculate the multivariate Fox H-function. By translating the Mellin-Barnes contour constraints into a linear optimization problem, we come up with a code automating the contour definition in Appendix D. A test example is finally provided in Appendix E.
Note that we have already introduced a C/MEX version of the multivariate Fox H-function in our package \cite{mfoxh}. An excerpt of the source and test examples are presented in Appendices F and G, respectively. Also, a Python implementation for the same generalized function can be found in \cite{Python}.

\begin{figure}[t!]
\hspace{5cm}
\includegraphics[scale=0.6]{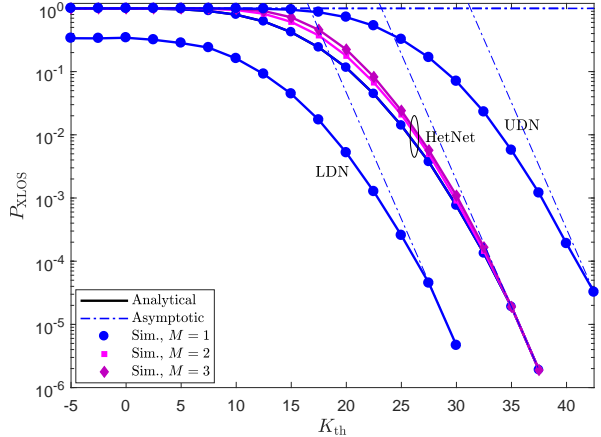}
\caption{XLOS probability versus $K_{\mathrm{th}}$ for UDN, HetNet (Macro/Femto) and LDN. Path-loss exponent $\nu=3$ and shadowing mean $\mu_n=0$ for all tiers $n=1,\ldots,N$.}
\end{figure}

Fig.~1 shows that, in the UDN case, the low ratio asymptotic regime is easily established, and good LOS conditions (e.g., $K_{\mathrm{th}}=12\,\mathrm{dB}$) are obtained with probability 1. Conversely, the low density network (LDN) scenario unfolds in NLOS situations with non-negligible probability (e.g., $K<-5\,\mathrm{dB}$ with probability $0.5$). By considering the neighboring cells ($M=2,3$) in the HetNet (Macro/Femto) case for instance, we remark that a substantial increase of the XLOS probability is achieved only in the non-asymptotic regime. Indeed, a high $K_{\mathrm{th}}$ requirement can be fulfilled merely by the serving cell, since the $K$-factors of neighbor cells become limited by the corresponding path-losses. 
\begin{table}[!htb]
\vspace{-4mm}
\label{Table1}
\centering
\newcolumntype{M}[1]{>{\centering\arraybackslash}m{#1}}

\caption{Network and transmission settings}\vspace{-2mm}
\begin{tabular}{|m{1cm}|m{0.25cm}|M{3cm}|M{1.5cm}|M{1cm}|M{1cm}|M{1cm}|}
\hline 
\multicolumn{1}{|>{\centering\arraybackslash}M{1cm}|}{\cellcolor{black!20} Case} & \multicolumn{1}{|>{\centering\arraybackslash}M{0.25cm}|}{\cellcolor{black!20} $N$} & \multicolumn{1}{>{\centering\arraybackslash}M{3cm}|}{\cellcolor{black!20} $\lambda_n$}& \multicolumn{1}{>{\centering\arraybackslash}M{1.5cm}|}{\cellcolor{black!20} $P_n(\mathrm{dBm})$}& \multicolumn{1}{>{\centering\arraybackslash}M{1cm}|}{\cellcolor{black!20} $\theta_n(\degree)$}& \multicolumn{1}{>{\centering\arraybackslash}M{1cm}|}{\cellcolor{black!20} $h_n(\mathrm{m})$}& \multicolumn{1}{>{\centering\arraybackslash}M{1cm}|}{\cellcolor{black!20} $\sigma_n (\mathrm{dB})$}\\
\hline 
\hline
UDN &$1$&$3\times10^{-2}$&$5.2$&$45$&$15$&5\\
\hline 
HetNet&$2$&$10^{-5},5\times10^{-4}$&$15.2, -4.8$&$39,180$&$25, 10$&8, 4\\
\hline
LDN &$1$&$3\times10^{-8}$&$5.2$&$65$&$30$&5\\
\hline
\end{tabular}
\end{table}
\vspace{-0.5cm}
\section{Conclusion}
In this letter, we have introduced the XLOS service probability as a new $K$-factor-based performance indicator, and provided its analytical and asymptotic expressions that establish a link between 5G network and transmission parameters and the gradual visibility condition of radio links. As exemplified in the introduction, by tweaking a $K$-factor threshold $K_{\mathrm{th}}$, the XLOS metric can be used by network optimization engineers as a switching probability between e.g., conventional localization schemes---requiring LOS---and Massive-MIMO ones operating in OLOS conditions. As a perspective, the adopted $K$-factor model from \cite{Greenstein} can be extended to the vehicular case in future works.
\vspace{-3mm}
\appendices
\section{Proof of Theorem 1}
First, by making a simple variable change, the product of lognormal PDFs $p_{\gamma_m},\ p=\prod_{m=1}^{M}p_{\gamma_m}$, can be reformulated as
\begin{equation}
\vspace{-1mm}
p\hspace{-0.5mm}=\hspace{-0.5mm}\frac{1}{\pi^{M/2}}\hspace{-1.2mm}\bigintssss_{\mathbb{R}^{M}}\hspace{-3.2mm}e^{-\left(u_{1}^{2}+\ldots+u_{M}^{2}\right)}Q\left(u_1,\ldots,u_M\right)\mathrm{d}u_1\ldots\mathrm{d}u_M, \label{Proof_PDF}
\vspace{-1mm}
\end{equation}
where $Q\left(u_1,\ldots,u_M\right)=\prod_{m=1}^{M}\delta\left(\gamma_m-10^{(\sqrt{2}\sigma_m u_{m}+\mu_m)/ 10}\right)$. By applying the Gaussian-weight Stroud monomial cubature \cite{Stroud,Cools} to (\ref{Proof_PDF}), and recalling that $\delta\left(\gamma_m-a\right)=\mathrm{H}_{0,0}^{0,0}\left[\frac{\gamma_m}{a}\begin{array}{|c} \rule{5mm}{0.4pt}\\ \rule{5mm}{0.4pt} \end{array}\hspace{-1mm}\right]$, we get
\begin{equation}
\vspace{-1mm}
p=\frac{1}{\pi^{M/2}}\sum_{l=1}^{L}w_l\prod_{m=1}^{M}\mathrm{H}_{0,0}^{0,0}\left[\frac{\gamma_m}{\omega_{l,m}}\begin{array}{|c} \rule{5mm}{0.4pt}\\ \rule{5mm}{0.4pt} \end{array}\hspace{-1mm}\right],\label{Proof_PDF2}
\vspace{-1mm}
\end{equation}
with $w_l$ and $\left(u_{l,1},\ldots,u_{l,M}\right)$ are respectively the $l^{\mathrm{th}}$ weight and abscissas of the $M$-dimensional cubature, and $\omega_{l,m}=10^{(\sqrt{2}\sigma_m u_{l,m}+\mu_m)/ 10}$. Finally, by invoking \cite[Eq. (2.53)]{Mathai2}, the integration of (\ref{Proof_PDF2}) with respect to $\gamma_m$ from $0$ to $\gamma_{\mathrm{th,m}}\,(m=1,\ldots,M)$ leads to (\ref{Thm1_CDF}).\hfill\qedsymbol
\vspace{-3mm}
\section{Proof of Corollary 1}
It immediately follows from applying the superposition theorem \cite[1.3.3]{Stochastic_Geometry_book} to the equivalent PPP $\Phi_T=\bigcup_{n=1}^{N}\widetilde{\Phi}_n$, and performing a PDF transformation to the joint distance distribution of the first $M$ neighbors given by theorem \cite[Appendix]{Distance}.\hfill\qedsymbol
\newpage
\section{GPU-Enabled Multivariate Fox H-Function MATLAB Code}
\definecolor{listinggray}{gray}{0.9}
\definecolor{lbcolor}{rgb}{0.97,0.97,0.97}
\lstset{
tabsize=4,    
language=Matlab,    
keywords={break,case,catch,continue,else,elseif,end,for,function,global,if,otherwise,persistent,return,switch,try,while},    basicstyle=\scriptsize\ttfamily, 
keywordstyle=\color{blue},    
commentstyle=\color{dkgreen}, 
stringstyle=\color{Purple},
numbers=left,    
numberstyle=\tiny\color{gray},    
stepnumber=1,    
numbersep=10pt, 
breaklines=true,
prebreak = \raisebox{0ex}[0ex][0ex]{\ensuremath{\hookleftarrow}},
frame=single,   
backgroundcolor=\color{lbcolor},    
tabsize=4,    
showspaces=false,    
showstringspaces=false}
\begin{lstlisting}
function out = mfoxh(z, Contour, an, Alphan, ap, Alphap, bq, Betaq, varargin)
% For dim >= 4, we recommend the use of GPU-enabled HPC servers
% an = [a1,...,an] and Alphan = [alpha,1,1 ... alpha,n,1;...; alpha,1,r...alpha,n,r]                      
% varargin form (i = 1 ..r): [ci,1 ...ci,n ; gammai,1 ... gammai,n], 
%[ci,n+1 ...ci,p ; gammai,n+1 ... gammai,p], 
%[di,1 ...di,m ; deltai,1 ... deltai,m], [di,m+1 ...di,q ; deltai,m+1 ... deltai,q]
% See notation in A. Mathai, The H-function, Theory and Applications, Annex A.1
%================================================================================================
dim    = size(Contour,1);

N      = 2^8 * 3^3 * 5^(2+dim); % For a better performance, N is made of powers of prime numbers
%========================== Multivariate Quasi Monte-Carlo Integration ==========================
p      = haltonset(dim,'Skip',1e3,'Leap',1e2);
p      = scramble(p,'RR2');
in     = gpuArray(net(p,N));
G      = gpuArray(ones(1,N));
C      = gpuArray(Contour);
points = kron(G,C(:,1)) + kron(G, C(:,2)-C(:,1)).* in';
mceval = Integrand(points);
mcsum  = sum(mceval,2);
v      = prod(C(:,2) - C(:,1)); % volume
out    = v * mcsum / N; % Integral
%========================================== Integrand ===========================================
function f = Integrand(s)
j       = sqrt(-1);
r       = length(z);
Nvar    = length(varargin);
for nvar = 1 : Nvar
 if(isempty(cell2mat(varargin(nvar))))
  varargin{nvar} = zeros(2,0);
 end
end
Phi = 1;
for i = 1 : r   
 cni = gpuArray(cell2mat(varargin(4*(i-1)+1)));
 cpi = gpuArray(cell2mat(varargin(4*(i-1)+2)));
 dmi = gpuArray(cell2mat(varargin(4*(i-1)+3)));
 dqi = gpuArray(cell2mat(varargin(4*(i-1)+4)));
 Phi = Phi .* ((GammaProd(1-cni(1,:),cni(2,:), s(i,:)).* GammaProd(dmi(1,:),-dmi(2,:), s(i,:)))...
  ./(GammaProd(cpi(1,:),-cpi(2,:), s(i,:)).* GammaProd(1-dqi(1,:),dqi(2,:), s(i,:)))).* z(i).^s(i,:);
end
Psi     = GammaProd(1-an,Alphan,s)./(GammaProd(ap,-Alphap,s).* GammaProd(1-bq,Betaq,s));
f       = (1/(2*j*pi)^r) * Phi .* Psi;
end
%========================================== GammaProd ===============================================
function output = GammaProd(p,m,s) 
if (isempty(p)|| isempty(m)) 
    output = ones(size(s(1,:)));
else
L1    = size(s,1);    
comb  = 0;
for i = 1 : L1
[pp ss]  = meshgrid(p,s(i,:));
 mm      = meshgrid(m(i,:),s(i,:));
 comb    =  comb + mm .* ss;
end
    output = reshape(prod(gammas(pp + comb),2),size(s(1,:)));
end
end
end
% gammas function here is the complex gamma, available in
% www.mathworks.com/matlabcentral/fileexchange/3572-gamma
end
\end{lstlisting}

\section{Automatic Contour Generator}
\begin{lstlisting}
function c = mfoxcontour(W, dim, an, Alphan, varargin)
% an = [a1,...,an] and Alphan = [alpha,1,1 ... alpha,n,1;...; alpha,1,r...alpha,n,r]                      
% varargin form (i = 1 ..r): [ci,1 ...ci,n ; gammai,1 ... gammai,n], 
% [di,1 ...di,m ; deltai,1 ... deltai,m]
% See notation in A. Mathai, The H-function, Theory and Applications, Annex A.1
% W   : control the width of the integration interval in [-i\infty +i\infty]
% dim : stands for the dimension

Nvar    = length(varargin);
epsilon = 1/10;
f  = ones(1,dim);
Q  = -Alphan.';
b  = 1-an-epsilon;
lb = [];
ub = [];

for i = 1 : Nvar/2    
 cni = cell2mat(varargin(2*(i-1)+1)); % [c1,i ...cn,i;gamma_1,i...gamma_n,i]
 if(isempty(cni)) cni = [-1e10;1]; end
 dmi = cell2mat(varargin(2*(i-1)+2));  % [d1,i ...dm,i;delta_1,i...delta_m,i]
 if(isempty(dmi)) dmi = [1e10;1]; end
 lb  = [lb max((cni(1,:)-1)./cni(2,:))]+epsilon;
 ub  = [ub min(dmi(1,:)./dmi(2,:))]-epsilon;    
end

options = optimoptions('linprog','Algorithm','interior-point');
out = linprog(f, Q, b, [], [], lb, ub, options);
c = [out.' - 1i * W; out.' + 1i * W];    
\end{lstlisting}

\section{MATLAB-GPU Test Code}
In this example, we evaluate both a trivariate and bivariate Fox H-functions, respectively given by,
\begin{equation}
\mathrm{H}_1=\mathrm{H}_{2,1:1,0:1,0:1,0}^{0,1:0,1:0,1:0,1}\left(\hspace{-1mm}\begin{array}{c} 3, 2, 0.5\end{array}\begin{array}{|c} \left(1.5; 1, 1, 1\right), \left(2; 1, 1, 1\right)\\ \left(2; 1, 1, 1\right)\end{array}\begin{array}{|c} \_\\ \left(0,1\right) \end{array}\begin{array}{|c} \_\\ \left(3,1\right)\end{array}\begin{array}{|c} \_\\ \left(1,1\right) \end{array}\hspace{-1mm}\right),
\end{equation}
and
\begin{equation}
\mathrm{H}_2=\mathrm{H}_{2,1:1,0:1,0}^{0,1:0,1:0,1}\left(\hspace{-1mm}\begin{array}{c} 3, 2\end{array}\begin{array}{|c} \left(1.5; 1, 1\right), \left(2; 1, 1\right)\\ \left(2; 1, 1\right)\end{array}\begin{array}{|c} \_\\ \left(0,1\right) \end{array}\begin{array}{|c} \_\\ \left(3,1\right)\end{array}\hspace{-1mm}\right).
\end{equation}
The user may set the contour manually or generate it via \ttfamily{mfoxcontour} \normalfont routine provided in Appendix D.

\begin{lstlisting}
% Trivariate example
z = [3 2 0.5];
an = [1.5];
ap = [2];
Alphan = [1 ; 1 ; 1];
Alphap = [1 ; 1 ; 1];
bq = [2];
Betaq = [1 ; 1 ; 1];
Contour = mfoxcontour(10, 3, an, Alphan, [],[0;1],[],[3;1],[],[1;1]);
Contour =
  -3.1000 -10.0000i  -3.1000 +10.0000i
   2.8000 -10.0000i   2.8000 +10.0000i
   0.9000 -10.0000i   0.9000 +10.0000i
H1 = mfoxh(z, Contour, an, Alphan, ap, Alphap, bq, Betaq,[],[],[0;1],[],[],[],[3;1],[],[],[],[1;1],[])
H1 =
   0.4886 + 0.0035i
   
% Bivariate example
z = [3 2];
an = [1.5];
ap = [2];
Alphan = [1 ; 1];
Alphap = [1 ; 1];
bq = [2];
Betaq = [1 ; 1];
Contour = [-1.5-10i -1.5+10i ; 2.5-10i 2.5+10i]; % contour set manually
H2 = mfoxh(z, Contour, an, Alphan, ap, Alphap, bq, Betaq,[],[],[0;1],[],[],[],[3;1],[])
H2 =
  -0.6014 + 0.0011i
\end{lstlisting}
\section{Parallel-CPU C/MEX Code: Excerpt from our Package \cite{mfoxh}}
\definecolor{listinggray}{gray}{0.9}
\definecolor{lbcolor}{rgb}{0.97,0.97,0.97}
\lstset{
backgroundcolor=\color{lbcolor},
    tabsize=4,    
    language=[GNU]C++,
        basicstyle=\scriptsize,
        upquote=false,
        aboveskip={1.5\baselineskip},
        columns=fixed,
        showstringspaces=false,
        extendedchars=false,
        breaklines=true,
        prebreak = \raisebox{0ex}[0ex][0ex]{\ensuremath{\hookleftarrow}},
        frame=single,
        numbers=left,
        showtabs=false,
        showspaces=false,
        showstringspaces=false,
        identifierstyle=\ttfamily,
        keywordstyle=\color[rgb]{0,0,1},
        commentstyle=\color[rgb]{0.026,0.112,0.095},
        stringstyle=\color[rgb]{0.627,0.126,0.941},
        numberstyle=\color[rgb]{0.205, 0.142, 0.73},
}

\begin{lstlisting}

#include <omp.h>
#include <stdio.h>
#include <math.h>
#include <gsl/gsl_qrng.h>
#include <gsl/gsl_complex.h>
#include <gsl/gsl_complex_math.h>
#include <gsl/gsl_sf_gamma.h>
#include <gsl/gsl_blas.h>
#include <gsl/gsl_vector_complex_double.h>
#include <gsl/gsl_matrix_complex_double.h>
#include <gsl/gsl_matrix_int.h>
#include "mfox.h"
#include "mex.h"

void mexFunction(int nlhs, mxArray *plhs[],
                 int nrhs, const mxArray *prhs[]){
     
    size_t i, j, k, mx, nx, dim;
    double  *ind, *xr, *xi, *zr, *zi, *max_call, *tol;
    gsl_complex x;
    gsl_vector_complex *xl, *xu;
    gsl_matrix_int *index;
    gsl_matrix_complex *Arg[20], *Emp;

if((nrhs -7) % 2  == 0){
dim = (size_t)((nrhs -7)/2);}
else{mexErrMsgTxt("Number of inputs is incorrect\n");}

// Retrieve vector [0, n, m1, n1, ..., mM, nM]
if(mxIsComplex(prhs[0])){mexErrMsgTxt("Indices should be integers\n");}
nx = mxGetN(prhs[0]);
if(nx != 2*dim+2){mexErrMsgTxt("Missing input(s) or extra elements in the first argument (index)\n");}
index   = gsl_matrix_int_alloc(1, nx);
ind     = mxGetPr(prhs[0]);
for(j = 0; j < nx; j++){
gsl_matrix_int_set(index, 0, (const size_t)j, (int)ind[j]);
}
// Retrieve matrices Ai and Bi

for(k = 0; k < nrhs-4; k++){
if(! mxIsEmpty(prhs[k+1]))
{
  /* Get the length of each input vector. */
  mx = mxGetM(prhs[k+1]);
  nx = mxGetN(prhs[k+1]);
  /* Check input parameters size */

if(k == 0 && nx != dim){mexErrMsgTxt("Missing Fox H argument(s) z\n");}
if(k > 0 && nx ==1){
         mexPrintf("Error in parameter # %d\n",k+2); 
         mexErrMsgTxt("Input size is incorrect\n");            
         return;
         }
if(mxIsComplex(prhs[k+1])){
  /* Get pointers to real and imaginary parts of the inputs. */
  xr = mxGetPr(prhs[k+1]);
  xi = mxGetPi(prhs[k+1]);

  Arg[k] = gsl_matrix_complex_alloc(mx, nx);

for(i = 0; i < mx; i++)
  {
      for(j = 0; j < nx; j++)
      {
          GSL_SET_COMPLEX(x,xr[i + mx*j],xi[i + mx*j]);
          gsl_matrix_complex_set(Arg[k], 
                     (const size_t)i, (const size_t)j, x);
      }
  }
}//end of if

else{
/* Get pointers to real part */
  xr = mxGetPr(prhs[k+1]);
  Arg[k] = gsl_matrix_complex_alloc(mx, nx);

for(i = 0; i < mx; i++)
  {
      for(j = 0; j < nx; j++)
      {
          GSL_SET_COMPLEX(x,xr[i + mx*j],0.0);
          gsl_matrix_complex_set(Arg[k], 
                     (const size_t)i, (const size_t)j, x);
      }
  }
}//end of else
}
else{
  Arg[k] = gsl_matrix_complex_alloc(2,1);
  gsl_matrix_complex_set_all(Arg[k],GSL_COMPLEX_ONE);
}
}// end of for(k=0...
// Retrieve integration intervals
  if(mxGetN(prhs[nrhs-3]) < dim || mxGetM(prhs[nrhs-3]) < 2){
                  mexErrMsgTxt("Contour matrix size incorrect"); 
                  }
  xl = gsl_vector_complex_alloc(dim);
  xu = gsl_vector_complex_alloc(dim);
/* Get pointers to real and imaginary parts of the inputs. */
  xr = mxGetPr(prhs[nrhs-3]);
  xi = mxGetPi(prhs[nrhs-3]);
//Initialize integration domains
for (i = 0; i < dim; i++) {
  gsl_vector_complex_set (xl, i, gsl_complex_rect(xr[2*i],xi[2*i]));
  gsl_vector_complex_set (xu, i, gsl_complex_rect(xr[2*i+1],xi[2*i+1]));
}
// Retrive max_call and tolerence
if(!mxIsComplex(prhs[nrhs-2]) && !mxIsComplex(prhs[nrhs-1])){
   max_call = mxGetPr(prhs[nrhs-2]);
   tol      = mxGetPr(prhs[nrhs-1]);
  }
else{mexErrMsgTxt("MaxFunEval and AbsTol must be real");}
  
//Perform integration
  gsl_complex result = GSL_COMPLEX_ZERO, error= GSL_COMPLEX_ZERO;
  gsl_qrng* qrng = gsl_qrng_alloc(gsl_qrng_reversehalton, dim);
/*Available sequences: gsl_qrng_halton, gsl_qrng_sobol, gsl_qrng_niederreiter_2, gsl_qrng_reversehalton*/
  quasi_monte_state* s = quasi_monte_alloc(dim);
  quasi_monte_integrate(xl, xu, dim, max_call[0], 0, tol[0], qrng, s, &result, &error, index, Arg);
  quasi_monte_free(s);
  gsl_qrng_free(qrng);
/* Create a new complex array and set the output pointer to it. */
  plhs[0] = mxCreateDoubleMatrix(1, 1, mxCOMPLEX);
  zr = mxGetPr(plhs[0]);
  zi = mxGetPi(plhs[0]);
  zr[0] = GSL_REAL(result);
  zi[0] = GSL_IMAG(result);
}
\end{lstlisting}
\vspace{-5mm}
\section{MEX Test Code: Excerpt from our Package \cite{mfoxh}}
\definecolor{listinggray}{gray}{0.9}
\definecolor{lbcolor}{rgb}{0.97,0.97,0.97}
\lstset{
tabsize=4,    
language=Matlab,    
keywords={break,case,catch,continue,else,elseif,end,for,function,global,if,otherwise,persistent,return,switch,try,while},    basicstyle=\scriptsize\ttfamily, 
keywordstyle=\color{blue},    
commentstyle=\color{dkgreen}, 
stringstyle=\color{Purple},
numbers=left,    
numberstyle=\tiny\color{gray},    
stepnumber=1,    
numbersep=10pt, 
breaklines=true,
prebreak = \raisebox{0ex}[0ex][0ex]{\ensuremath{\hookleftarrow}},
frame=single,   
backgroundcolor=\color{lbcolor},    
tabsize=4,    
showspaces=false,    
showstringspaces=false}
\vspace{-6mm}
\begin{lstlisting}
%***************************************** 2-Dimensions **************************************** 
 index = [0 1 1 1 1 1]; % [0 m m1 n1 m2 n2 ...mM nM]
 z = [1 2]; % [z1...zM]
 A = [1.5 1.0 1.0; 2.0 1.0 1.0]; % [a1 alpha_1,1...alpha_1,M; ...; ap alpha_p,1...alpha_p,M]
 B = [2.0 1.0 1.0]; % [b1 beta_1,1...beta_1,M;...; bq beta_q,1 ...beta_q,M]
 A1 = [-1 1]; % [a1_1 alpha1_1;...;a1_p1 alpha1_p1]
 B1 = [0 1]; % [b1_1 beta1_1;...;b1_q1 beta1_q1]
 A2 = [-1 1];
 B2 = [3 1];
 %c  = [-0.5-10i 1.5-10i;-0.5+10i 1.5+10i];
 c = mfoxcontour(10, 2, index, A, A1, B1, A2, B2); 
 % Tolerence settings
 MaxFunEval = 2*1e5; % increase it to get more precision (especially for more than 2 dimensions)
 AbsTol     = 1e-4;
 tic;
 mfoxh(index, z, A, B, A1, B1, A2, B2, c, MaxFunEval, AbsTol)
 toc;
%***************************************** 3-Dimensions **************************************** 
 index = [0 1 1 1 1 1 1 1];
 z = [1 2 0.5];
 A = [1.5 1.0 1.0 1; 2.0 1.0 1.0 1];
 B = [2.0 1.0 1.0 1];
 A1 = [-1 1];
 B1 = [0 1];
 A2 = [-1 1];
 B2 = [3 1];
 A3 = [0 1];
 B3 = [1 1];
 %c  = [-0.5-10i 1.5-10i 0.5-10i;-0.5+10i 1.5+10i 0.5+10i];
 c = mfoxcontour(10, 3, index, A, A1, B1, A2, B2, A3, B3)
 % Tolerence settings
 MaxFunEval = 2*1e6; % increase it to get more precision (especially for more than 2 dimensions)
 AbsTol     = 1e-4;
 tic;
 mfoxh(index, z, A, B, A1, B1, A2, B2, A3, B3, c, MaxFunEval, AbsTol)
 toc;
%\end{lstlisting}
\vspace{-2.5mm}

\end{document}